\documentclass[twocolumn, pra , nofootinbib , superscriptaddress
]{revtex4-1}
\usepackage{graphicx}
\usepackage{amsfonts}
\usepackage{amssymb}
\usepackage{amsthm}
\usepackage{array}
\usepackage{amsmath}
\usepackage{verbatim} 
\usepackage{hyperref}
\usepackage{color}
\usepackage{bbold}
\usepackage{epstopdf}
\usepackage{mathtools}
\usepackage{enumerate}
\usepackage{subcaption}
\usepackage{ulem}
\usepackage{physics}
\usepackage{subcaption}
\usepackage{soul,xcolor}

\newtheorem*{proposition*}{Proposition}

\newtheorem{theorem}{Theorem}
\newtheorem*{theorem*}{Theorem}
\newtheorem{corollary}{Corollary}[theorem]
\newtheorem*{corollary*}{Corollary}

\captionsetup{justification   = raggedright,
              singlelinecheck = false}

\setstcolor{red}

\begin{document}
\title{Identifying quantum phase transitions via geometric measures of nonclassicality}
\author{Kok Chuan Tan}
\email{bbtankc@gmail.com}
\affiliation{ School of Physical and Mathematical Sciences, Nanyang Technological University, Singapore 637371, Republic of Singapore}

\begin{abstract}
In this article, we provide theoretical support for the use of geometric measures of nonclassicality as a general tool to identify quantum phase transitions. We argue that divergences in the susceptibility of any geometric measure of nonclassicality are sufficient conditions to identify phase transitions at arbitrary temperature.  This establishes that geometric measures of nonclassicality, in any quantum resource theory, are generic tools to investigate phase transitions in quantum systems. At zero temperature, we show that geometric measures of quantum coherence are especially useful for identifying first order quantum phase transitions, and can be a particularly robust alternative to other approaches employing measures of quantum correlations.
\end{abstract}

\maketitle

\section{Introduction}

The development of various characterizations and notions of nonclassicality in recent years have lead to several proposals to apply such notions in order to probe a system undergoing a phase transition. Examples include entanglement\cite{Osterloh2002, Osborne2002, Wu2004}, quantum discord\cite{Sarandy2009, Werlang2010}, and more recently, quantum coherence\cite{Karpat2014, Chen2016, Malvezzi2016, Qin2018}. As these methods study the intrinsic nonclassical properties of quantum states, they do not require any prior knowledge about the order parameters associated with the phase transition. Notions of quantum nonclassicality are also often accompanied by novel physical interpretations. For instance, entanglement has operational interpretations in terms of quantum teleportation\cite{Bennett1991}, quantum cryptography\cite{Ekert1991}, and superdense coding\cite{Bennett1992}. Quantum discord has been shown to be a useful resource for entanglement distribution\cite{Chuan2012} and remote state preparation\cite{Dakic2012}. Quantum coherence has been applied to quantum state merging\cite{Streltsov2016}, speed-ups in quantum computation\cite{Hillery2016, Shi2017}, and nonclassical light\cite{Tan2017}. 
Probing phase transitions using such notions of nonclassicality therefore opens up the use of powerful mathematical machinery that was developed in order to study and interpret nonclassicality in the quantum information sciences\cite{Horodecki2001, Modi2012, Streltsov2017, Tan2019, Chitambar2019}. 

In this article, we provide geometric arguments justifying the use of geometric measures of nonclassicality\cite{Wei2003, Dakic2010, Baumgratz2014} in the identification and detection of quantum phase transitions\cite{Wei2005, Cheng2016, Sha2018, Qin2018, Malvezzi2016, Sha2018} under very general conditions. Specifically, we argue that for arbitrary quantum resource theories, divergences in the geometric nonclassical susceptibility or its first derivative are sufficient conditions for identifying phase transitions. This opens up the use of any geometric measure of nonclassicality, including but not limited to entanglement, quantum discord, or quantum coherence, to probe the phase transitions of a quantum system. 

In particular, for quantum phase transitions at zero temperature, we show that first order phase transitions where the system experiences a sudden change in the ground state can always be identified via a diverging geometric coherence susceptibility. This is true even when entanglement or quantum discord may potentially fail to identify the phase transition. We also show how the geometric coherence susceptibility may be a more general approach than many other methods employing Berry phases\cite{Carollo2005, Zhu2006} or order parameters. This suggests that out of all the possible measures of nonclassicality, measures of quantum coherence may be particularly relevant to the study of phase transitions in quantum systems.

\section{Preliminaries}

A phase transition is characterized by dramatic changes in the system of interest when there is a small variation in some physical control parameter $\lambda$. A critical parameter is then some value $\lambda = \lambda_c$ where a phase transition occurs. 

A quantum phase transition (QPT)\cite{Vojta2003} is defined as a phase transition that occurs at zero temperature. At zero temperature, contributions from thermal fluctuations are completely removed from consideration and since thermal fluctuations are typically considered to be classical contributions, any critical phenomena that remains can be thought of as purely quantum in nature. In this scenario, a system in thermal equilibrium occupies the ground state of the Hamiltonian, $H(\lambda)$, which depends on some control parameter $\lambda$.

In this article, we will adopt a geometric approach to the study of phase transitions. Suppose the control parameter $\lambda$ is a real number which labels the points along some curve in state space, $\rho(\lambda)$. We then consider some distance measure, also called a metric, $D$, within this state space. Recall that $D$ is a proper distance measure when it satisfies the following properties: for any quantum states $\rho$, $\sigma$ and $\tau$, (i) $D(\rho, \sigma) \geq 0$ (ii) $D(\rho,\sigma) = 0$ iff $\rho = \sigma$, (iii) $D(\rho,\sigma) = D(\sigma,\rho)$ and (iv) $D(\rho, \sigma) \leq D(\rho,\tau) + D(\sigma, \tau)$. The last property is particularly noteworthy and is called the triangle inequality. 

For a given distance measure $D$, we will consider the distance between two infinitesimally close states along the curve $\rho(\lambda)$ and $\rho(\lambda+\delta\lambda)$. This is called a line element and is denoted $\dd{s}$. Its derivative with respect to $\lambda$ is denoted $\dd{s}/\dd{\lambda} \coloneqq \lim_{\delta \lambda \rightarrow 0^+}D[\rho(\lambda+\delta\lambda), \rho(\lambda)]/\delta\lambda$. Note that this is defined as a limit over positive $\delta \lambda$ so $\dd{s}/\dd{\lambda}$ is a non-negative quantity that directly quantifies the rate of change occurring in the system for an infinitesimal variation in the control parameter. When a system undergoes a phase transition, it is expected that $\dd{s}/\dd{\lambda}$ becomes non-analytic, as that signals structural changes in the system when $\lambda$ is varied. This viewpoint is in line with the differential geometric approach, which identifies quantum phase transitions via non-analyticities in the quantum geometric tensor\cite{Zanardi2007}. We note that given a (Riemannian) metric tensor, a proper distance measure may be defined, while the converse may not be true in general. In this sense, the derivative $\dd{s}/\dd{\lambda}$ can be considered a generalization of the quantum geometric tensor approach.

A quantum coherence measure is a basis dependent measure of the amount of quantum superposition amongst orthogonal quantum states. Let us consider a complete basis $\{ \ket{e_i} \}$. We say that a quantum state $\rho$ is incoherent if its density matrix has no non-zero off-diagonal elements, i.e. $\rho_{ij} \coloneqq \bra{e_i}\rho \ket{e_j} = 0$ for every $i \neq j$. Otherwise, we say that the state has coherence.  Since the diagonal elements of the density matrix and hence the coherence is always defined with respect to some given basis $\{ \ket{e_i} \}$, this is called the incoherent basis, and a state $\rho$ is incoherent if and only if its density matrix is diagonal with respect to this basis.

An important class of coherence measures are the so-called geometric coherence measures. The geometric coherence is defined as the quantity
\begin{align*}
\mathcal{C}_D(\rho) \coloneqq \min_{\sigma \in \mathcal{I}} D(\rho,\sigma),
\end{align*} where $D$ is some distance measure and the minimization is over the set of all incoherent states $\mathcal{I}$. For instance, one can choose the distance measure $D(\rho, \sigma)$ to be $\norm{\rho - \sigma}_{l1}$ where $\norm{\cdot}_{l1}$ is the $l1$-norm. This then gives rise to the so-called $l1$-norm of coherence\cite{Baumgratz2014} which turns out to be the absolute sum of all off diagonal elements $\mathcal{C}_{l1}(\rho) = \sum_{i\neq j} \abs{\rho_{ij}} .$

Based on the above definition of the geometric coherence, we can also define the geometric coherence susceptibility (GCS), which quantifies the rate of change of the geometric coherence of a state $\rho(\lambda)$ with respect to a change in the parameter $\lambda$. It is defined as
\begin{align*}
\mathcal{X}_D[\rho(\lambda)] &\coloneqq \dv{\mathcal{C}_D[\rho(\lambda)]}{\lambda} \\
&= \lim_{\delta \lambda \rightarrow 0} \{\mathcal{C}_D[\rho(\lambda+\delta\lambda)] - \mathcal{C}_D[\rho(\lambda)] \} / \delta \lambda. 
\end{align*}

More generally, for arbitrary quantum resource theories, one can also similarly define a geometric nonclassicality quantifier $\mathcal{N}_D(\rho) \coloneqq \min_{\sigma \in \mathcal{S}} D(\rho,\sigma)$ where $\mathcal{S}$ is any set of classical states.  The corresponding geometric nonclassical susceptibility (GNS) is denoted $\mathcal{X_{\mathcal{N},D}}[\rho(\lambda)]$. Such measures play a significant role in the study of quantum resources such as entanglement\cite{Wei2003} and quantum discord\cite{Dakic2010}.

In the following sections, we will consider the role of GNS and GCS in identifying QPTs.

\section{Geometric nonclassical susceptibility and phase transitions at arbitrary temperature}

We first consider the GNS for arbitrary quantum resource theories. It may be expected that a diverging GNS implies a sudden structural change in the system and therefore indicates a phase transition. The following theorem provides a general geometric argument that is true for any quantum resource under consideration. 

\begin{theorem} \label{thm::sufficientcond}
	If the nonclassical susceptibility $\mathcal{X_{\mathcal{N},D}}[\rho(\lambda)]$  diverges at some critical parameter $\lambda = \lambda_c$, then $\dd{s}/\dd{\lambda}$ also diverges and $\lambda_c$ is a critical parameter indicating a phase transition
\end{theorem}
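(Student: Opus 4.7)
The plan is to show that $ds/d\lambda$ dominates the absolute susceptibility pointwise, so that any blowup of $\mathcal{X}_{\mathcal{N},D}$ forces a corresponding blowup of $ds/d\lambda$. The entire argument hinges on the triangle inequality for $D$, which is the only nontrivial property of a proper distance measure that relates the three points $\rho(\lambda)$, $\rho(\lambda+\delta\lambda)$, and an optimal classical state.

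First, I would let $\sigma^\star(\lambda)\in\mathcal{S}$ denote a classical state achieving the minimum, so that $\mathcal{N}_D[\rho(\lambda)] = D[\rho(\lambda),\sigma^\star(\lambda)]$ (assuming $\mathcal{S}$ is closed so that the minimum is attained — see the subtlety below). Using $\sigma^\star(\lambda)$ as a suboptimal candidate in the definition of $\mathcal{N}_D[\rho(\lambda+\delta\lambda)]$ and applying the triangle inequality gives
\begin{align*}
\mathcal{N}_D[\rho(\lambda+\delta\lambda)] &\leq D[\rho(\lambda+\delta\lambda),\sigma^\star(\lambda)] \\
&\leq D[\rho(\lambda+\delta\lambda),\rho(\lambda)] + \mathcal{N}_D[\rho(\lambda)].
\end{align*}
Running the same argument with $\lambda$ and $\lambda+\delta\lambda$ interchanged (using $\sigma^\star(\lambda+\delta\lambda)$ as the witness) and combining both bounds yields
\begin{align*}
\bigl|\mathcal{N}_D[\rho(\lambda+\delta\lambda)] - \mathcal{N}_D[\rho(\lambda)]\bigr| \leq D[\rho(\lambda+\delta\lambda),\rho(\lambda)].
\end{align*}

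Next, I would divide by $|\delta\lambda|$ and take the limit $\delta\lambda \to 0^+$. The left-hand side bounds $|\mathcal{X}_{\mathcal{N},D}[\rho(\lambda)]|$ from above, while the right-hand side is precisely $\dd{s}/\dd{\lambda}$. Therefore $|\mathcal{X}_{\mathcal{N},D}[\rho(\lambda)]| \leq \dd{s}/\dd{\lambda}$, and if the susceptibility diverges at $\lambda_c$ then so must $\dd{s}/\dd{\lambda}$. Since $\dd{s}/\dd{\lambda}$ quantifies the rate at which the physical state $\rho(\lambda)$ changes along the curve, its divergence marks a non-analytic structural change, i.e.\ a phase transition at $\lambda_c$. (If the divergence of the susceptibility is only one-sided, the same inequality applied with $\delta\lambda<0$ localizes the divergence of the line element derivative at points arbitrarily close to $\lambda_c$, which suffices to identify $\lambda_c$ as critical.)

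The main obstacle, and the only genuine subtlety, is justifying that the minimum in $\mathcal{N}_D(\rho)=\min_{\sigma\in\mathcal{S}}D(\rho,\sigma)$ is attained, so that an optimal $\sigma^\star$ exists to serve as the vertex of the triangle inequality. This is automatic whenever $\mathcal{S}$ is compact and $D$ is continuous — conditions met by the standard resource theories (entanglement, discord, coherence) in finite dimensions, where the free set is closed and bounded. If $\mathcal{S}$ fails to be closed one can instead work with an $\epsilon$-optimal $\sigma^\star$ and pass the bound through after taking $\epsilon\to 0$, which does not change the final inequality. Beyond this, the argument is purely geometric and does not rely on any specific choice of resource theory, which is precisely why the theorem applies uniformly to GNS built from any distance measure.
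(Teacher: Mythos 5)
Your proposal is correct and follows essentially the same route as the paper: take the optimal classical state $\sigma^\star(\lambda)$ as the third vertex, apply the triangle inequality to bound $\abs{\mathcal{N}_D[\rho(\lambda+\delta\lambda)] - \mathcal{N}_D[\rho(\lambda)]}$ by $D[\rho(\lambda+\delta\lambda),\rho(\lambda)]$, and divide by $\delta\lambda$. Your symmetric two-sided bound (rather than the paper's reparametrization to force a sign) and your remark on attainment of the minimum are minor refinements of the same argument, not a different proof.
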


\begin{proof}

Suppose $\rho(\lambda)$ is the ground state density matrix of some Hamiltonian $H(\lambda)$ which depends on some external parameter $\lambda$. The coherence susceptibility of the geometric coherence, w.r.t. some external parameter $\lambda$, is then defined as the quantity
\begin{align*}
\mathcal{X}_{\mathcal{N},D}[\rho(\lambda)] &\coloneqq \dv{\mathcal{N}_D[\rho(\lambda)]}{\lambda} \\
&= \lim_{\delta \lambda \rightarrow 0} \{\mathcal{N}_D[\rho(\lambda+\delta\lambda)] - \mathcal{N}_D[\rho(\lambda)] \} / \delta \lambda. 
\end{align*}

Let us consider $\mathcal{N}_D[\rho(\lambda+\delta\lambda)] - \mathcal{N}_D[\rho(\lambda)]$.  Without any loss in generality, we can assume that $\mathcal{N}_D[\rho(\lambda+\delta\lambda)] - \mathcal{N}_D[\rho(\lambda)] \geq 0$ as otherwise we can always reparametrize $\lambda$ to go in the other direction such that the assumption will always be true. Suppose $\sigma(\lambda)$ is the optimal state that achieves $\mathcal{N}_D[\rho(\lambda)] = D[\rho(\lambda),\sigma(\lambda)]$. We have the following series of inequalities: 
\begin{align}
\mathcal{N}_D & [\rho(\lambda+\delta\lambda)] - \mathcal{N}_D[\rho(\lambda)] \\
&= \min_{\sigma \in \mathcal{I}} D[\rho(\lambda+\delta\lambda),\sigma] - \min_{\sigma \in \mathcal{I}} D[\rho(\lambda),\sigma] \label{eq:susc1}\\
&\leq D[\rho(\lambda+\delta\lambda),\sigma(\lambda)] - D[\rho(\lambda),\sigma(\lambda)] \label{eq:susc2}\\
&\leq D[\rho(\lambda+\delta\lambda), \rho(\lambda)]. \label{eq:susc3}
\end{align}
	
Eq.~(\ref{eq:susc1}) comes from the definition of geometric coherence. 	The inequality in Eq.~(\ref{eq:susc2}) comes from fact that $\sigma(\lambda)$ is the optimal state that minimizes the distance to $\rho(\lambda)$, but in general may be suboptimal for the state $\rho(\lambda+\delta\lambda)$. The inequality in Eq.~(\ref{eq:susc3}) comes from the reverse triangle inequality $\abs{D(A,C)-D(B,C)} \leq D(A,B)$. 

Now suppose that $\mathcal{X}_{\mathcal{N},D}$ diverges at point $\lambda_c$ such that $\mathcal{X}_{\mathcal{N},D}[\rho(\lambda_c)] = \infty$, then from the inequality in Eqs.\ref{eq:susc3}, we must also have $D[\rho(\lambda+\delta\lambda), \rho(\lambda)] /\delta\lambda \rightarrow \infty $ as $\delta\lambda \rightarrow 0$. This implies that $\dd{s}/\dd{\lambda}$ diverges at $\lambda = \lambda_c$, so there must be a phase transition at that point.
\end{proof}

It is also frequently observed that instead of a divergence, phase transitions are accompanied by a cusp or a kink in the GNS, i.e. the first derivative of the GNS diverges instead of the GNS itself. The following theorem provides a geometric argument that a cusp or a kink in the GNS can also be used to identify phase transitions under general conditions.

\begin{theorem} \label{thm::2ndOrdSuffCond}
If the first derivative of the nonclassical susceptibility $\dd{\mathcal{X_{\mathcal{N},D}}[\rho(\lambda)]}/\dd{\lambda}$ diverges at some critical parameter $\lambda = \lambda_c$, then $\dd[2]{s}/\dd{\lambda}^2$ diverges and $\dd{s}/\dd{\lambda}$ is non-analytic at $\lambda = \lambda_c$.
\end{theorem}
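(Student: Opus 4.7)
The plan is to iterate the argument of Theorem~\ref{thm::sufficientcond} at one higher order, aiming to establish a second-order counterpart of the first-order bound $|\mathcal{X}_{\mathcal{N},D}(\lambda)| \leq ds/d\lambda$, namely an inequality of the form $|d\mathcal{X}_{\mathcal{N},D}/d\lambda| \leq d^2s/d\lambda^2$. Once such a bound is in hand, the assumed divergence of the left side at $\lambda_c$ immediately forces divergence of $d^2s/d\lambda^2$, and non-analyticity of $ds/d\lambda$ follows because analytic functions must have finite first derivatives.

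I would begin by observing that the proof of Theorem~\ref{thm::sufficientcond} actually establishes the two-sided Lipschitz-type bound $|\mathcal{N}_D[\rho(\lambda+\delta\lambda)] - \mathcal{N}_D[\rho(\lambda)]| \leq D[\rho(\lambda+\delta\lambda),\rho(\lambda)]$, since the argument may be repeated with the roles of $\lambda$ and $\lambda+\delta\lambda$ interchanged. Mirroring Theorem~\ref{thm::sufficientcond}, I would then form the finite difference $\mathcal{X}_{\mathcal{N},D}[\rho(\lambda+\delta\lambda)] - \mathcal{X}_{\mathcal{N},D}[\rho(\lambda)]$ and, assuming without loss of generality that it is non-negative (via reparametrization if needed), rewrite it as a limit of second-order finite differences of $\mathcal{N}_D$. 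Inserting the two optimal classical states $\sigma(\lambda)$ and $\sigma(\lambda+\delta\lambda)$ at the appropriate places---each serving as the witness at its own base point and as a suboptimal competitor at the shifted point---and applying the reverse triangle inequality should produce an upper bound in terms of $D[\rho(\lambda_1),\rho(\lambda_2)]$ distances between nearby states along the curve. Taking the nested limits $h \to 0^+$ and then $\delta\lambda \to 0$ turns these distances into a finite difference of $ds/d\lambda$ evaluated at two neighbouring parameter values, whose limit yields $d^2 s/d\lambda^2$.

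The main obstacle I expect lies in handling the two distinct optimal states that enter at second order. In Theorem~\ref{thm::sufficientcond} only one $\sigma(\lambda)$ appears, so no regularity of the optimizer in $\lambda$ is required, but here both $\sigma(\lambda)$ and $\sigma(\lambda+\delta\lambda)$ participate, and a naive chaining of the first-order bound would produce a \emph{sum} of two $ds/d\lambda$ contributions rather than their difference, which collapses to zero in the second-order limit. The delicate step will therefore be to arrange the inequality chain so that the cross contribution either cancels or is absorbed via an additional triangle-inequality estimate on $D[\sigma(\lambda+\delta\lambda),\sigma(\lambda)]$, leaving $d^2 s/d\lambda^2$ rather than just a multiple of $ds/d\lambda$ on the right-hand side. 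Once this is accomplished, the conclusion that $d^2 s/d\lambda^2$ diverges, and hence that $ds/d\lambda$ is non-analytic at $\lambda_c$, is immediate.
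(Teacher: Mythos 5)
There is a genuine gap: you correctly identify the central obstacle --- that at second order a naive chaining of the first-order Lipschitz bound produces a \emph{sum} of two arc-length contributions rather than a difference --- but you leave the resolution of that obstacle as an open ``delicate step'' rather than supplying it. The paper resolves it with a specific and simple device that your plan does not contain: write the second derivative as the \emph{symmetric} second-order difference quotient $\{\mathcal{N}_D[\rho(\lambda+\delta\lambda)] - 2\mathcal{N}_D[\rho(\lambda)] + \mathcal{N}_D[\rho(\lambda-\delta\lambda)]\}/\delta\lambda^2$ and insert only the \emph{single} optimizer $\sigma(\lambda)$ at the central point. The sign pattern of the central difference is exactly what makes this work: the middle term, which carries the negative sign, is handled by \emph{optimality} (it equals $-2D[\rho(\lambda),\sigma(\lambda)]$ exactly), while the two outer terms, which carry positive signs, are each upper-bounded by \emph{suboptimality}, $\mathcal{N}_D[\rho(\lambda\pm\delta\lambda)] \leq D[\rho(\lambda\pm\delta\lambda),\sigma(\lambda)]$. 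Two applications of the reverse triangle inequality then eliminate $\sigma(\lambda)$ entirely, leaving $D[\rho(\lambda+\delta\lambda),\rho(\lambda)] + D[\rho(\lambda-\delta\lambda),\rho(\lambda)]$, which over $\delta\lambda^2$ is precisely the second-order difference quotient for $\dd[2]{s}/\dd{\lambda}^2$ (since $D[\rho(\lambda),\rho(\lambda)]=0$ can be inserted for free as the middle term). No second optimizer, no regularity of $\sigma$ in $\lambda$, and no estimate on $D[\sigma(\lambda+\delta\lambda),\sigma(\lambda)]$ is ever needed.

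Your proposed route --- inserting both $\sigma(\lambda)$ and $\sigma(\lambda+\delta\lambda)$, ``each serving as the witness at its own base point and as a suboptimal competitor at the shifted point'' --- is exactly the move that creates the uncancellable cross term you worry about, and there is no generic triangle-inequality estimate on $D[\sigma(\lambda+\delta\lambda),\sigma(\lambda)]$ that rescues it, because the optimizer of a geometric nonclassicality measure need not vary continuously with $\lambda$, let alone at a rate controlled by $D[\rho(\lambda+\delta\lambda),\rho(\lambda)]$. If you insist on starting from the forward difference $\mathcal{X}(\lambda+\delta\lambda)-\mathcal{X}(\lambda)$, note that expanding each term as a first-order forward difference with the same step $h$ yields $\{\mathcal{N}(\lambda+2h)-2\mathcal{N}(\lambda+h)+\mathcal{N}(\lambda)\}/h^2$, which is again a central difference (centered at $\lambda+h$), so the same single-optimizer trick applies there; the point is that the optimizer must be taken at the \emph{middle} node, not at the two endpoints. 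As written, your argument stops exactly where the proof actually begins.
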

\begin{proof}
The derivative $\dd{\mathcal{X}_{\mathcal{N},D}[\rho(\lambda)]}/\dd{\lambda}$ is just the second order derivative of the nonclassicality quantifier $\mathcal{N}_{D}[\rho(\lambda)]$. This can be written as the limit
\begin{align*}
&\dv{\mathcal{N}_D[\rho(\lambda)]}{\lambda} 
\\&\coloneqq \lim_{\delta \lambda \rightarrow 0} \frac{\mathcal{N}_{D}[\rho(\lambda+\delta \lambda)]-2\mathcal{N}_{D}[\rho(\lambda)]+\mathcal{N}_{D}[\rho(\lambda-\delta\lambda)]}{\delta \lambda^2}.
\end{align*}

As in the proof of Theorem~\ref{thm::sufficientcond}, let $\sigma(\lambda)$ is the optimal state that achieves $\mathcal{N}_D[\rho(\lambda)] = D[\rho(\lambda),\sigma(\lambda)]$.  We will again assume without any loss in generality that $\dd{\mathcal{X}_{\mathcal{N},D}[\rho(\lambda)]}/\dd{\lambda} \geq 0 $, as otherwise we can just appropriately reparametrize $\lambda$. We then consider the numerator, which can be shown to obey the following series of inequalities:
\begin{align}
&\mathcal{N}_{D}[\rho(\lambda+\delta \lambda)]-2\mathcal{N}_{D}[\rho(\lambda)]+\mathcal{N}_{D}[\rho(\lambda-\delta\lambda)]  \\
&\leq D[\rho(\lambda+\delta \lambda),\sigma(\lambda)] - 2D[\rho(\lambda),\sigma(\lambda)] + D[\rho(\lambda+\delta \lambda),\sigma(\lambda)] \label{2ndOrdSusc1}\\
&\leq D[\rho(\lambda+\delta \lambda),\rho(\lambda)]+ D[\rho(\lambda -\delta \lambda),\sigma(\lambda)] \label{2ndOrdSusc2}\\
&= D[\rho(\lambda+\delta \lambda),\rho(\lambda)]- 2D[\rho(\lambda),\rho(\lambda)]+ D[\rho(\lambda -\delta \lambda),\sigma(\lambda)]. \label{eq::2ndOrderUpperBound}
\end{align} In Eq.~\ref{2ndOrdSusc1}, we used the fact that $\sigma(\lambda)$ is the optimal state that minimizes the distance to $\rho(\lambda)$, but is suboptimal in general for $\rho(\lambda \pm \delta \lambda)$. In Eq.~\ref{2ndOrdSusc2}, we applied the reverse triangle inequality $\abs{D(A,C)-D(B,C)} \leq D(A,B)$. Eq.~\ref{eq::2ndOrderUpperBound} then some from the fact that $D[\rho(\lambda),\rho(\lambda)]=0$, which is a fundamental property of any distance measure $D$.

We then observe that
\begin{align}
\dv[2]{s}{\lambda} &=\lim_{\delta \lambda \rightarrow 0} \{ D[\rho(\lambda+\delta \lambda),\rho(\lambda)] \\
&- 2D[\rho(\lambda),\rho(\lambda)]+ D[\rho(\lambda -\delta \lambda),\sigma(\lambda)] \} / \delta\lambda^2. \label{eq::2ndOrderLimit}
\end{align}

Therefore, if $\dd{\mathcal{X}_{\mathcal{N},D}[\rho(\lambda)]}/\dd{\lambda} = \infty$ at some $\lambda = \lambda_c$, then from Eq.~\ref{eq::2ndOrderUpperBound} and Eq.~\ref{eq::2ndOrderLimit}, we must also have $\dd[2]{s}/\dd{\lambda}^2 = \infty $. Since the derivative of $\dd{s}/\dd{\lambda} $ diverges, it is non-analytic at $\lambda = \lambda_c$.
\end{proof}

Theorems~\ref{thm::sufficientcond} and \ref{thm::2ndOrdSuffCond} provides geometric justification for the use of geometric measures of any quantum resource for identifying quantum phase transitions. We note that in these arguments, no prior assumptions are made about the nature of the state $\rho(\lambda)$, so the results apply to quantum systems at arbitrary temperature. In the following section, prove stronger statements in the zero temperature case, which suggests that geometric coherence measures may be an especially robust tool for identifying quantum phase transitions.

\section{Geometric coherence susceptibility and quantum phase transitions}

In this section, we will consider QPTs at zero temperature. We are therefore interested in probing phase transitions that occur in the ground state $\ket{\psi_0(\lambda)}$ of some Hamiltonian $H(\lambda)$. It is expected that for first order QPTs, a sudden change in the ground state represented by a discontinuity in $\ket{\psi_0(\lambda)}$ across the critical parameter will occur. The following theorem demonstrates that any such change in the ground state is equivalent to the existence of some incoherent basis where GCS diverges.

\begin{theorem} \label{thm::neccSuffCond}
	At zero temperature, a first order quantum phase transition occurs and $\dd{s}/\dd{\lambda}$ diverges at some critical parameter $\lambda = \lambda_c$, if and only if there exists an incoherent basis where the coherence measure $\mathcal{C}_D[\rho(\lambda)]$ is discontinuous at $\lambda = \lambda_c$, and $\mathcal{X}_D[\rho(\lambda)]$ diverges at $\lambda = \lambda_c$.
\end{theorem}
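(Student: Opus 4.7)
The plan is to prove the two directions separately, starting with the easier $(\Leftarrow)$. Given discontinuity of $\mathcal{C}_D[\rho(\lambda)]$ at $\lambda_c$ in some incoherent basis and divergence of $\mathcal{X}_D$ there, Theorem~\ref{thm::sufficientcond} applied to the special case $\mathcal{N}_D = \mathcal{C}_D$ immediately gives divergence of $\dd{s}/\dd{\lambda}$. To identify the transition as first order I would invoke continuity of $\mathcal{C}_D(\rho)$ in $\rho$, which follows from a standard min-over-compact-set argument using continuity of $D$ and compactness of the set $\mathcal{I}$ of incoherent states. A discontinuous $\mathcal{C}_D[\rho(\lambda)]$ then forces discontinuity of $\rho(\lambda)$ itself at $\lambda_c$. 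Since at zero temperature $\rho(\lambda) = \pjct{\psi_0(\lambda)}{\psi_0(\lambda)}$, this is precisely a jump in the ground state projector, which is the definition of a first order QPT.

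For the $(\Rightarrow)$ direction, I would let $\ket{\phi_\pm}$ denote the one-sided limits of $\ket{\psi_0(\lambda)}$ at $\lambda_c$, taken for the density matrices to avoid global phase ambiguity. A first order QPT gives $\pjct{\phi_-}{\phi_-} \neq \pjct{\phi_+}{\phi_+}$. The core step is to construct an incoherent basis in which $\ket{\phi_-}$ is a basis vector (so that $\mathcal{C}_D[\pjct{\phi_-}{\phi_-}] = 0$) while $\ket{\phi_+}$ has strictly positive geometric coherence. If $\braket{\phi_-}{\phi_+} \neq 0$, I would take $\ket{\phi_-}$ together with the normalized component of $\ket{\phi_+}$ orthogonal to it and complete arbitrarily to an orthonormal basis; then $\ket{\phi_+}$ is a non-trivial superposition of two basis vectors, so $\pjct{\phi_+}{\phi_+}$ has a non-vanishing off-diagonal entry and $\mathcal{C}_D[\pjct{\phi_+}{\phi_+}] > 0$ for any faithful $D$. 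If $\braket{\phi_-}{\phi_+} = 0$ with $\dim \mathcal{H} \geq 3$, a generic orthonormal basis of the subspace orthogonal to $\ket{\phi_-}$ spreads $\ket{\phi_+}$ over at least two basis vectors, and the same conclusion holds. This gives the required discontinuity of $\mathcal{C}_D$ and divergence of $\mathcal{X}_D$; divergence of $\dd{s}/\dd{\lambda}$ is part of the hypothesis.

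The main obstacle I anticipate is the degenerate case $\dim \mathcal{H} = 2$ with $\braket{\phi_-}{\phi_+} = 0$, in which antipodal Bloch vectors carry identical geometric coherence in every orthonormal basis, so the construction above cannot produce a jump. I would flag this as an implicit genericity assumption on the Hilbert space dimension, since physical QPTs arise in effectively large-dimensional many-body systems, or alternatively weaken the discontinuity condition to $\mathcal{C}_D[\pjct{\phi_-}{\phi_-}] \neq \mathcal{C}_D[\pjct{\phi_+}{\phi_+}]$ rather than requiring one value to vanish. The continuity-of-$\mathcal{C}_D$-in-$\rho$ step used in the reverse direction is routine and I would mention it as a brief auxiliary observation rather than prove it in full.
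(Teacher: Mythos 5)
Your proposal is correct and follows essentially the same route as the paper: the forward direction builds an incoherent basis containing $\ket{\psi_0(\lambda_c-\delta)}$ and splits into the partial-overlap and orthogonal cases exactly as the paper does, and your converse direction rests on the same Lipschitz-type bound $\abs{\mathcal{C}_D(\rho)-\mathcal{C}_D(\rho')}\leq D(\rho,\rho')$ (the paper obtains it explicitly via the optimal incoherent state and the reverse triangle inequality, whereas you package it as continuity of $\mathcal{C}_D$ plus an appeal to Theorem~\ref{thm::sufficientcond}; these are interchangeable). The one substantive point where you go beyond the paper is the degenerate case $\dim\mathcal{H}=2$ with $\braket{\phi_-}{\phi_+}=0$: the paper's instruction to ``choose a basis where $\ket{e_i}\neq\ket{\psi_0(\lambda_c+\delta)}$ for every $i\geq 1$'' while keeping $\ket{e_0}=\ket{\psi_0(\lambda_c-\delta)}$ is impossible for a qubit, and as you note, antipodal pure qubit states carry equal geometric coherence in every basis for the usual distances, so the paper's argument silently assumes $\dim\mathcal{H}\geq 3$. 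Your flagging of this (and the genericity fix for $\dim\mathcal{H}\geq 3$) is a legitimate refinement rather than a defect of your proof.
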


\begin{proof}
	Let the Hamiltonian describing the system be $H(\lambda)$, and the ground state be $\ket{\psi_0(\lambda)}$. The corresponding density matrix is denoted $\rho_0(\lambda)$.

	Suppose for a given distance measure $D$, $\dd{s}/\dd{\lambda}$ diverges at $\lambda=\lambda_c$ and there is a discontinuity in the quantum state along the curve parametrized by $\lambda$. This means that the states as you approach $\lambda=\lambda_c$ from above and below are different, i.e. $\lim_{\delta \rightarrow 0^+} \rho_0(\lambda_c-\delta)  \neq \lim_{\delta \rightarrow 0^+} \rho_0(\lambda_c+\delta)$.
	
	
	Let us choose an incoherent basis $\{ \ket{e_i} \}$ such that $\ket{e_0} = \ket{\psi_0(\lambda_c -\delta)}$, where $\delta > 0 $. We observe that in the basis $\{ \ket{e_i} \}$, $\mathcal{C}_D[\rho_0(\lambda_c-\delta)] = 0$.
	
	Consider $\ket{\psi_0(\lambda_c + \delta)}$ where $\delta >0$. Since $\ket{\psi_0(\lambda_c + \delta)} \neq \ket{\psi_0(\lambda_c - \delta)} $ as we take the limit $\delta \rightarrow 0$, there are only two special cases we need to consider. $\ket{\psi_0(\lambda_c + \delta)}$ is either orthogonal to $\ket{\psi_0(\lambda_c - \delta)}$, or it has partial overlap with $\ket{\psi_0(\lambda_c - \delta)}$.
	
	If it is orthogonal, we can just choose a basis where $\ket{e_i} \neq \ket{\psi_0(\lambda_c + \delta)}$ for every $i \geq 1$. Since $\ket{\psi_0(\lambda_c + \delta)}$ is not an element of the incoherent basis, this means that $\mathcal{C}_D[\rho_0(\lambda_c+\delta)] > 0$ even in the limit $\delta \rightarrow 0$.
	
	If there is partial overlap, then we can write $\ket{\psi_0(\lambda_c + \delta)} = a \ket{\psi_0(\lambda_c - \delta)} + b \ket{\psi^\perp}$, where $\ket{\psi^\perp}$ is some normalized vector orthogonal to $\ket{\psi_0(\lambda_c - \delta)}$. Since there is a discontinuity in the ground state, we are guaranteed that $b$ will not go to zero as $\delta \rightarrow 0$. We can therefore choose $\ket{e_1} = \ket{\psi^\perp}$.  Since $\ket{e_0} = \ket{\psi_0(\lambda_c -\delta)}$ and $\ket{\psi_0(\lambda_c + \delta)} = a \ket{e_0} + b \ket{e_1}$, this means that we have $\mathcal{C}_D[\rho_0(\lambda_c+\delta)] > 0$ even in the limit $\delta \rightarrow 0$.
	
	In either case, it suggests that we can always find a basis $\{ \ket{e_i} \}$ where $\lim_{\delta \rightarrow 0^+} \mathcal{C}_D[\rho_0(\lambda_c-\delta)] = 0$ and  $\lim_{\delta \rightarrow 0^+}\mathcal{C}_D[\rho_0(\lambda_c+\delta)] > 0$, so  $\mathcal{C}_D[\rho_0(\lambda)]$ is a step function in the immediate vicinity of $\lambda = \lambda_c$. This implies $\mathcal{X}_D[\rho(\lambda)]$ diverges at $\lambda = \lambda_c$. This proves the theorem in the forward direction.
	
	For the converse direction, suppose the coherence measure $\mathcal{C}_D$ is discontinuous and $\lim_{\delta \lambda \rightarrow 0^+} \abs{ \mathcal{C}_D[\rho(\lambda_c - \delta \lambda)] -  \mathcal{C}_D[\rho(\lambda_c + \delta \lambda)]} = \Delta $, for some $\Delta > 0$. Without any loss in generality, we will assume that the coherence decreases as we increase $\lambda$ such that $\mathcal{C}_D[\rho(\lambda_c - \delta \lambda)] >  \mathcal{C}_D[\rho(\lambda_c + \delta \lambda)]$, as otherwise we can reparametrize $\lambda$ to go in the other direction. Let $\sigma$ be the optimal state achieving $\mathcal{C}_D[\rho(\lambda_c + \delta \lambda)] = D[\rho(\lambda_c+\delta \lambda),\sigma]$. We then have the following series of inequalities:
	
	\begin{align}
	& \mathcal{C}_D[\rho(\lambda_c - \delta \lambda)] -  \mathcal{C}_D[\rho(\lambda_c + \delta \lambda)] \\ 
	&\leq D[\rho(\lambda_c - \delta \lambda), \sigma] - D[\rho(\lambda_c + \delta \lambda), \sigma] \label{eq::nessSuff1}\\
	&\leq D[\rho(\lambda_c - \delta \lambda), \rho(\lambda_c + \delta \lambda)] \label{eq::nessSuff2}
	\end{align}
	
	In Eq.~\ref{eq::nessSuff1}, we used the definition $\mathcal{C}_D(\rho) = \min_{\sigma \in \mathcal{I}} D[\rho,\sigma]$ and the fact that $\sigma$ is optimal for the state $\rho(\lambda_c+\delta\lambda)$, but is in general suboptimal for $\rho(\lambda_c-\delta\lambda)$. In Eq.~\ref{eq::nessSuff2}, we used the inverse triangle inequality $\abs{D(A,C)-D(B,C)} \leq D(A,B)$.
	
	Finally, combining Eq.~\ref{eq::nessSuff2} and the fact that  $\lim_{\delta \lambda \rightarrow 0^+} \abs{ \mathcal{C}_D[\rho(\lambda _c- \delta \lambda)] -  \mathcal{C}_D[\rho(\lambda_c + \delta \lambda)]} = \Delta $ implies $\lim_{\delta \lambda \rightarrow 0^+} D[\rho(\lambda_c - \delta \lambda), \rho(\lambda_c + \delta \lambda)]  > \Delta$, which shows that there is a discontinuity in the ground state, so there is a first order QPT.
	
\end{proof}

Theorem~\ref{thm::neccSuffCond} therefore singles out geometric measures of quantum coherence as a useful tool to probe first order QPTs where other nonclassical measures may potentially fail. We will illustrate this with an example in a subsequent section.

\section{Geometric coherence susceptibility, Berry phases, and order parameters}

For many systems, the Berry phase is a useful tool for studying QPTs. In this section, we consider how the GCS is related to the Berry phase at a critical parameter.

Suppose the ground state of the Hamiltonian $H(\lambda)$ is $\ket{\psi_0(\lambda)}$ and that the system is adiabatically evolved through some close looped trajectory in state space. In such a case, the evolution of the ground state at any point along this closed loop can be described by $U(\mu)\ket{\psi_0(\lambda)}$, where $U(\mu) \coloneqq e^{-iG(\mu)}$ and $G(\mu)$ is a Hermitian operator that depends on the parameter $\mu \in [0,2\pi]$. Since the trajectory follows a closed loop, the unitary $U(\mu)$ and the Hermitian operator $G(\mu)$ must satisfy the cyclic property $U(0)\ket{\psi_0(\lambda)} = \ket{\psi_0(\lambda)} = U(2\pi)\ket{\psi_0(\lambda)}$. 

We now consider the Berry phase generated by an evolution described by $G(\mu) = \mu O$, where $O$ is some Hermitian operator.

\begin{corollary} \label{thm::BerryPhaseCoh}
	Consider a Berry phase generated by a cyclic unitary of the type $U(\mu) \coloneqq e^{-i \mu O}$, $\mu \in [0,2\pi]$ acting on a ground state $\ket{\psi_0}$ of the system Hamiltonian $H(\lambda)$.
	
	Suppose at some critical parameter $\lambda = \lambda_c$  that the Berry phases just before and after the critical parameter is given by $\phi^B(\lambda^-_c) = \phi^-$ and $\phi^B(\lambda^+_c) = \phi^+ $ respectively,  where $\lambda^-_c < \lambda_c < \lambda^+_c$. 
	
	Then the Berry phase is discontinuous such that $\phi^- \neq \phi^+$ only if $\mathcal{X}_{D}(\ket{\psi_0(\lambda)})$ is divergent at $\lambda = \lambda_c$ for some incoherent basis.
	
\end{corollary}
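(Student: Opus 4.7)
The plan is to reduce this corollary to Theorem~\ref{thm::neccSuffCond} by proving its contrapositive: a discontinuity in the Berry phase at $\lambda_c$ forces a discontinuity in the ground state $\ket{\psi_0(\lambda)}$, at which point Theorem~\ref{thm::neccSuffCond} immediately supplies an incoherent basis in which $\mathcal{X}_D$ diverges.

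First, I would evaluate the Berry phase explicitly for the specific cyclic unitary $U(\mu) = e^{-i\mu O}$. Setting $\ket{\psi(\mu)} = e^{-i\mu O}\ket{\psi_0(\lambda)}$, the commutativity $[O,e^{-i\mu O}] = 0$ together with the Hermiticity of $O$ (which makes $\bra{\psi_0(\lambda)}O\ket{\psi_0(\lambda)}$ real and $\mu$-independent) collapses the standard connection integral to a closed form,
\begin{align*}
\phi^B(\lambda) \;=\; i\int_0^{2\pi}\bra{\psi(\mu)}\partial_\mu\ket{\psi(\mu)}\,\dd{\mu} \;=\; 2\pi\bra{\psi_0(\lambda)}O\ket{\psi_0(\lambda)},
\end{align*}
understood modulo $2\pi$. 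The essential payoff is that the gauge-invariant phase factor $e^{i\phi^B(\lambda)} = e^{2\pi i\bra{\psi_0(\lambda)}O\ket{\psi_0(\lambda)}}$ is a continuous functional of the ground state.

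From here the argument is short. Suppose, for contradiction, that $\lambda\mapsto\ket{\psi_0(\lambda)}$ is continuous at $\lambda_c$; then the expectation $\bra{\psi_0(\lambda)}O\ket{\psi_0(\lambda)}$, and hence the phase factor $e^{i\phi^B(\lambda)}$, would also be continuous there, contradicting $\phi^-\neq\phi^+$. Thus $\ket{\psi_0(\lambda)}$ must be discontinuous at $\lambda_c$, which is exactly the premise of the forward direction of Theorem~\ref{thm::neccSuffCond}. Applying that theorem produces an incoherent basis in which $\mathcal{C}_D[\rho_0(\lambda)]$ is discontinuous and $\mathcal{X}_D[\rho_0(\lambda)]$ diverges at $\lambda_c$, as required.

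The main subtlety — and the only place one can trip up — is the modular ambiguity of the Berry phase: the identification $\phi^B = 2\pi\langle O\rangle$ holds only modulo $2\pi$, so an inequality of real-valued representatives need not correspond to a physical discontinuity. The clean way to handle this is to read the hypothesis $\phi^-\neq\phi^+$ as $e^{i\phi^-}\neq e^{i\phi^+}$ and to argue continuity at the level of the phase factor $e^{i\phi^B(\lambda)}$, which the closed-form expression above makes manifestly continuous in $\ket{\psi_0(\lambda)}$. Once that modular issue is resolved, the reduction to Theorem~\ref{thm::neccSuffCond} is immediate and no further work is needed.
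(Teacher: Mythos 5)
Your proof is correct and follows essentially the same route as the paper: compute $\phi^B = 2\pi\bra{\psi_0(\lambda)}O\ket{\psi_0(\lambda)}$ for this cyclic unitary, conclude that a jump in the Berry phase forces a discontinuity in the ground state, and then invoke Theorem~\ref{thm::neccSuffCond}. Your contrapositive phrasing and your explicit handling of the modulo-$2\pi$ ambiguity (which the paper silently glosses over) are minor refinements of the paper's direct argument, not a different approach.
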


\begin{proof}
	
	We first compute the Berry phase generated by the unitary $U(\mu)$. It can be verified that it is given by 
	\begin{align}
	\phi^B &= i \int_0^{2\pi}\bra{\psi_0(\lambda)}U(\mu)^\dagger \dv{\mu} U(\mu) \ket{\psi_0(\lambda)} \\
	&= -i^2 \int_0^{2\pi}\bra{\psi_0(\lambda)}U(\mu)^\dagger U(\mu) O \ket{\psi_0(\lambda)} \\
	&=2 \pi  \bra{\psi_0(\lambda)} O \ket{\psi_0(\lambda)}, 
	\end{align} where $\ket{\psi_0(\lambda)}$ is the ground state of the Hamiltonian $H(\lambda)$. The density matrix of $\ket{\psi_0(\lambda)}$ is denoted $\rho_0(\lambda) = \ket{\psi_0(\lambda)}\bra{\psi_0(\lambda)}$. 
	
	Suppose the Berry phase is discontinuous and $\phi^- \neq \phi^+$. This implies that $\phi^-/2\pi  = \bra{\psi_0(\lambda^-_c)} O \ket{\psi_0(\lambda^-_c)} \neq \bra{\psi_0(\lambda^+_c)} O \ket{\psi_0(\lambda^+_c)} = \phi^+/2\pi$ for $\lambda^-_c < \lambda_c < \lambda^+_c$.  This implies $\lim_{\delta \rightarrow 0^+} \rho_0(\lambda_c+\delta/2)  \neq \lim_{\delta \rightarrow 0^+} \rho_0(\lambda_c-\delta/2)$, so there is a first order QPT at $\lambda=\lambda_c$. Theorem~\ref{thm::neccSuffCond} then shows that the coherence susceptibility $\mathcal{X}_{D}(\ket{\psi_0(\lambda)})$ is divergent for some incoherent basis at the critical parameter.

\end{proof}

A similar argument also shows that if a first order QPT is identifiable by some order parameter, then it must also be identifiable by a diverging GCS.

\begin{corollary} \label{thm::OrderParam}
	Let $O$ be some order parameter for a system described by a Hamiltonian $H(\lambda)$. Let $\ket{\psi_0(\lambda)}$ be the ground state of the system. 
	
	Suppose at some critical parameter $\lambda = \lambda_c$, the mean value of the order parameter as we approach the critical parameter from below and above are $\bra{\psi_0(\lambda_c^-)}O\ket{\psi_0(\lambda_c^-)} $  and $\bra{\psi_0(\lambda_c^+)}O\ket{\psi_0(\lambda_c^+)} $ respectively, where $\lambda_c^- < \lambda_c < \lambda_c^+$.
	
	Then the mean value of the order parameter is discontinuous and $\bra{\psi_0(\lambda_c^-)}O\ket{\psi_0(\lambda_c^-)}  \neq \bra{\psi_0(\lambda_c^+)}O\ket{\psi_0(\lambda_c^+)}$ only if $\mathcal{X}_{D}(\ket{\psi_0(\lambda)})$ is divergent at $\lambda = \lambda_c$ for some incoherent basis.
	
\end{corollary}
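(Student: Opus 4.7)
The plan is to follow exactly the template of Corollary~\ref{thm::BerryPhaseCoh}, recognizing that the mean value of the order parameter is the expectation value of a fixed Hermitian operator $O$ in the ground state and can be written as $\bra{\psi_0(\lambda)}O\ket{\psi_0(\lambda)} = \Tr[O\rho_0(\lambda)]$. The key observation is that this is a linear, and hence continuous, functional of the density operator $\rho_0(\lambda)$ whenever $O$ is bounded (the physically relevant case for an order parameter). This turns the corollary into an essentially routine reduction to Theorem~\ref{thm::neccSuffCond}.

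Concretely, I would proceed by contraposition: if $\rho_0(\lambda)$ were continuous in $\lambda$ at $\lambda = \lambda_c$ (in the trace norm, say, which agrees with any of the distance measures $D$ considered here on finite-dimensional or well-behaved infinite-dimensional Hilbert spaces), then the map $\lambda \mapsto \Tr[O\rho_0(\lambda)]$ would also be continuous at $\lambda_c$. The hypothesis $\bra{\psi_0(\lambda_c^-)}O\ket{\psi_0(\lambda_c^-)} \neq \bra{\psi_0(\lambda_c^+)}O\ket{\psi_0(\lambda_c^+)}$ therefore forces $\lim_{\delta\to 0^+}\rho_0(\lambda_c - \delta) \neq \lim_{\delta\to 0^+}\rho_0(\lambda_c + \delta)$, i.e.\ a genuine discontinuity of the ground state across $\lambda_c$, which is the defining feature of a first order QPT.

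The remaining step is to apply Theorem~\ref{thm::neccSuffCond} directly to this ground state discontinuity. That theorem guarantees the existence of an incoherent basis $\{\ket{e_i}\}$ in which the coherence measure $\mathcal{C}_D[\rho_0(\lambda)]$ is a step function in a neighborhood of $\lambda_c$, and consequently $\mathcal{X}_D[\rho_0(\lambda)]$ diverges at $\lambda = \lambda_c$, which is exactly the conclusion claimed.

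I do not expect any serious obstacle: the corollary is structurally a near-trivial consequence of Theorem~\ref{thm::neccSuffCond} once one recognizes that order-parameter expectation values are continuous in $\rho$. The only subtle point worth a single sentence in the write-up is to ensure that the notion of continuity used to deduce ground-state discontinuity is compatible with the distance measure $D$ underlying $\mathcal{C}_D$; for finite-dimensional systems or for bounded $O$ this is automatic by equivalence of the relevant norms, and it is the setting implicitly assumed throughout the paper.
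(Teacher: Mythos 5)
Your proposal is correct and follows essentially the same route as the paper: the paper likewise infers the ground-state discontinuity $\lim_{\delta\rightarrow 0}\rho_0(\lambda_c-\delta/2)\neq\lim_{\delta\rightarrow 0}\rho_0(\lambda_c+\delta/2)$ from the jump in $\bra{\psi_0}O\ket{\psi_0}$ (implicitly using the continuity of the linear functional $\rho\mapsto\Tr[O\rho]$ that you make explicit) and then invokes Theorem~\ref{thm::neccSuffCond} exactly as you do. Your added remarks on boundedness of $O$ and norm compatibility are harmless extra rigor rather than a different argument.
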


\begin{proof}
	Let the density matrix of the ground state $\ket{\psi_0(\lambda)}$ be denoted by $\rho_0(\lambda) = \ket{\psi_0(\lambda)}\bra{\psi_0(\lambda)}$. 
	
	Since $\bra{\psi_0(\lambda_c^-)}O\ket{\psi_0(\lambda_c^-)}  \neq \bra{\psi_0(\lambda_c^+)}O\ket{\psi_0(\lambda_c^+)}$ for $\lambda_c^- < \lambda_c < \lambda_c^+$, and this remains true even as $\lambda_c^-$ and $\lambda_c^+$ approaches $\lambda_c$, we must have that $\lim_{\delta \rightarrow 0} \rho_0(\lambda_c+\delta/2)  \neq \lim_{\delta \rightarrow 0} \rho_0(\lambda_c-\delta/2)$. The rest of the argument follows identically as Corollary~\ref{thm::BerryPhaseCoh}.
\end{proof}

Corollaries~\ref{thm::BerryPhaseCoh} and~\ref{thm::OrderParam} demonstrate how the GCS approach is more general than many types of Berry phases or order parameters that are used to identify first order phase transitions. This supports the view that GCS can be a more general alternative for probing QPTs.

\section{Example}

We consider a  one dimensional spin-$\frac{1}{2}$ chain with $XY$ interaction. The simplest example of this is a two spin system. As we shall see, this example is particular instructive, and describes many of the salient features of the results that were discussed. The Hamiltonian is given by 
\begin{align*}
H(\delta, h) = - \frac{1+\delta}{2} \sigma^x_1 \sigma^x_2 - \frac{1-\delta}{2} \sigma^y_1\sigma^y_2 -\frac{h}{2}(\sigma^z_1 + \sigma^z_2  ).
\end{align*} The parameter $\delta$ describes the anisotropy between the $X$ and the $Y$ interactions, while $h$ describes the strength of the local magnetic field. For any given $\delta$ and $h$, let $\ket{\psi_0(\delta,h)}$ denote the ground state of $H(\delta,h)$.

The above system is described by a $4\times4$ matrix, so we can directly compute the eigenvalues and eigenvectors. One may verify that the Hamiltonian has the eigenvalues $\pm 1$ and $\pm r$, where $r = \sqrt{\delta^2 + h^2}$. The eigenvector corresponding to the eigenvalue $-1$ is the odd parity state $\ket{g_{-}} \coloneqq (\ket{01}_{1,2} +\ket{10}_{1,2})/\sqrt{2}$. The eigenvector corresponding to the eigenvalue $-r$ is the even parity state $\ket{g_{+} (\delta,h)} \coloneqq \cos \frac{\theta}{2} \ket{00}_{1,2} +\sin \frac{\theta}{2}\ket{11}_{1,2},$ where $\tan \theta \coloneqq \delta/h$.

We see that the ground state of the system depends on the value of $r$. When $r > 1$,  the ground state is $\ket{\psi_0(\delta,h)} = \ket{g_+ (\delta,h)}$. When $r < 1$, the ground state is $\ket{\psi_0(\delta,h)} = \ket{g_-}$. The point $r=1$ therefore identifies a critical parameter, since there is a sudden change in the ground state around this point.

Furthermore, let us consider the Berry phase generated by the cyclic unitary $U(\mu) = \exp[-i \mu (\sigma^z_1 +\sigma^z_2)/2]$. Such Berry phases have been experimentally observed in Ref.~\cite{Peng2010}. One may verify that this will transform the ground state such that 
\begin{align*}
&U(\mu)\ket{g_-} = \ket{g_-} = \frac{1}{\sqrt{2}}(\ket{01}_{1,2} +\ket{10}_{1,2})\\
&U(\mu)\ket{g_+(\delta,h)} =  \cos\frac{\theta}{2} \ket{00}_{1,2} +e^{-2i\mu} \sin \frac{\theta}{2}\ket{11}_{1,2}.
\end{align*} Integrating over $\mu \in [0,2\pi]$, we observe that when $r<1$, the Berry phase is $\phi^- = 0 $, and when $r>1$, the accumulated Berry phase is $\phi^+ = -4\pi \cos \theta$. 

Finally, we can choose the total magnetization $O = \sigma^z_1+\sigma^z_2$ to be the order parameter. We see that when $r < 1$, $\bra{g_-} O \ket{g_-} = 0$, and when $r >1$, we have $\bra{g_+(\delta,h)} O \ket{g_+(\delta,h)}= \sin^2 \frac{\theta}{2} - \cos^2 \frac{\theta}{2} = -\cos \theta$.

 The sudden change in the ground state, in conjunction with the sudden accumulation of the Berry phase and the change in the magnetization when $0<\abs{\cos \theta} <1$, suggests that the QPT may be detected by observing the divergences in the GCS (see Theorems~\ref{thm::sufficientcond} and Corollaries-\ref{thm::BerryPhaseCoh} and~\ref{thm::OrderParam}). 
 
 To verify this, let us choose $D$ to be the $l1$-norm induced distance and $\mathcal{C}_{l1}$ to be the $l1$-norm of coherence. We then compute the $l1$-norm of coherence in the computational basis $\{ \ket{00}_{1,2}, \ket{01}_{1,2}, \ket{10}_{1,2}, \ket{11}_{1,2} \}$. We see that $\mathcal{C}_{l1}(\ket{g_-}) = 1$, while $\mathcal{C}_{l1}[\ket{g_+(\delta, h)}] = \abs{ \sin \theta} < 1$ when  $0<\abs{\cos \theta} <1$. The coherence is therefore a step function in the vicinity of $r = 1$, which means the coherence susceptibility $\mathcal{X}_{l1} [ \ket{\psi_0(\delta, h)} ] $ diverges at $r = 1$.
 
 Let us now consider the special case where $h = 0$. We then have $\theta = \pi/2$ and there is a quantum phase transition occurring at $\delta = 1$. In this case, $\ket{g_+(\delta, h = 0)} = \frac{1}{\sqrt{2}} ( \ket{00}_{1,2} + \ket{11}_{1,2})$, which is a maximally entangled state. Observe that $\ket{g_-}$ is also maximally entangled. The total entanglement in the system therefore does not change at the energy level crossing $\delta = 1$. This transition is therefore not detected by divergences in the entanglement susceptibility. Note that since quantum discord and entanglement are equivalent over the set of pure states, discord measures will also not be able to detect this energy level crossing. 
 
  Theorem~\ref{thm::sufficientcond} however, suggests that we should be able to find an incoherent basis where the GCS diverges. Indeed, one can compute $\mathcal{C}_{l1}$ in the incoherent basis $\{ \frac{1}{\sqrt{2}} ( \ket{01}_{1,2} + \ket{10}_{1,2}), \frac{1}{\sqrt{2}} ( \ket{01}_{1,2} - \ket{10}_{1,2}),  \ket{00}_{1,2} , \ket{11}_{1,2} \}$. In this basis, we see that $\mathcal{C}_{l1}(\ket{g_-}) = 0$ and $\mathcal{C}_{l1}[\ket{g_+(\delta, h)}] = 1$, so again, the coherence is a step function in the vicinity of the critical parameter and $\mathcal{X}_{l1} [ \ket{\psi_0(\delta, h)} ] $ diverges at $\delta = 1$.

For the more general case of $N$ spins, the Hamiltonian has the form 
\begin{align*}
H(\delta, h) = -\sum^N_{j=1} \left( \frac{1+\delta}{4} \sigma^x_j \sigma^x_{j+1} + \frac{1-\delta}{4} \sigma^y_j\sigma^y_{j+1} +\frac{h}{2}\sigma^z_j \right ) ,
\end{align*} where $N$ is the total number of spins, $\sigma^a_j$, $a=x,y,z$ are the canonical Pauli operators acting on the $j$th spin.  In Ref~\cite{Wei2005}, the derivative of the entanglement density was investigated in the thermodynamic limit $N \rightarrow \infty$. It was observed that the phase transition at $h=0$, $\delta = 1$ was not identified by the entanglement susceptibility.


It is known that for $N>1$, the ground state belongs to either one of the parity sectors. At $h = \pm \sqrt{1-\delta^2}$, i.e. $r= \sqrt{\delta^2 + h^2 }= 1$, a phase transition occurs where the parity of ground state flips\cite{Pasquale2009}. We already see this from the two spin case, where we see that at $r = 1$, there is an energy level crossing and the ground state flips from the odd parity state $\ket{g_-}$ to the even parity state $\ket{g_+(\delta, h)}$. 

For $r < 1$, let the ground state be $\ket{g_p(\delta, h)}$, and for $r>1$, let the ground state be $\ket{g_{-p}(\delta, h)}$, where $p = \pm $ denotes the parity of the ground state. 

Suppose the subspace with parity $-p$ is spanned by some orthonormal set $\{ \ket{e_m} \}_{m=0}^{M_p-1}$ and the subspace with parity $p$ is spanned by another orthonormal set $\{ \ket{e_m} \}_{m=M_p}^{2^N}$. For the subspace with parity $p$, we can choose $\ket{e_{M_p}} = \ket{g_p(\delta^-, h^-)}$ at some $\delta^-, h^-$ very close to criticality such that $r <1$. For the subspace with parity $-p$, we choose $\ket{e_0} = \ket{g_{-p}(\delta^+, h^+)}$ at some $\delta^+,h^+$ very close to criticality such that $r>1$. We then perform a Fourier transform $\ket{e'_{m'}} \coloneqq \sum_{m=0}^{M_p-1} \exp[-\frac{2\pi i}{M_p}mm']/\sqrt{M_p}\ket{e_m}$. Finally, we can choose our incoherent basis to be $\{ \ket{e'_m} \}_{m=0}^{M_p-1} \bigcup \{ \ket{e_m} \}_{m=M_p}^{2^N}$.

The above prescription ensures that $\mathcal{C}_D(\ket{g_p(\delta^-,h^-}) = 0$ since $ \ket{g_{p}(\delta^-, h^-)}$ belongs to the incoherent basis, but $\mathcal{C}_D(\ket{g_p(\delta^+,h^+}) > 0$ since $\ket{g_p(\delta^+,h^+)}$ is not an element of the incoherent basis. There is therefore a sudden jump in the coherence as we cross $r=1$, so we are guaranteed that $\mathcal{X}_D ( \ket{ \psi_0 (\delta, h)} ) $ diverges at the critical parameter for arbitrary $N$. 

See Refs.~\cite{Qin2018, Malvezzi2016, Sha2018} for further examples where geometric measures of coherence were also used to identify QPTs.

\section{conclusion}

In this article, we considered the role that geometric measures of nonclassicality play in the identification of phase transitions. Theorems~\ref{thm::sufficientcond} and \ref{thm::2ndOrdSuffCond} show that geometry based measures of nonclassicality are generic tools that can be used to probe phase transitions at arbitrary temperature. These results apply to any quantum resource theory, which include notions such as entanglement, quantum discord and quantum coherence. While we have only considered geometric measures of nonclassicality, one may also expect that many non-geometric measures will exhibit similar behaviour during phase transitions. This is because both geometric and non-geometric measures are ultimately trying to capture the same underlying notion of nonclassicality.    

We then considered QPTs at zero temperature. In this regime, we showed in Theorem~\ref{thm::neccSuffCond} that any sudden change in the ground state at the point of criticality can always be picked up by a diverging GCS, measured with respect to some incoherent basis. In support of this, Theorem~\ref{thm::BerryPhaseCoh} and Corollary~\ref{thm::OrderParam} show that large classes of QPTs that can be detected via Berry phases or order parameters can also be detected by a diverging GCS. 

We illustrate the case by considering a toy model consisting of 2 qubits with XY interaction. We show that an energy level crossing in this model cannot be detected using entanglement or coherence measures, since the total quantum correlation remains unchanged. By appropriately defining an incoherent basis however, one can demonstrate a diverging GCS at the point of phase transition. This points to the utility of quantum coherence measures as an alternative for probing certain types of QPTs where other quantum correlations based methods may fail. We can intuitively understand this to be because quantum correlations such as entanglement and discord may be viewed as special kinds of quantum coherence\cite{Tan2016, Tan2018}.

We hope that this work will spur continued research on the relationship between nonclassicality and quantum phase transitions.

%

\acknowledgments K.C. Tan was supported by the NTU Presidential Postdoctoral Fellowship program funded by Nanyang Technological University.


\begin{thebibliography}{99}
\bibitem{Osterloh2002} A. Osterloh, L. Amico, G. Falci, and  R. Fazio, Nature {\bf 416}, 608 (2002).

\bibitem{Osborne2002} T. J. Osborne, and M. A. Nielsen, Phys. Rev. A {\bf 66}, 032110 (2002).

\bibitem{Wu2004} L.-A. Wu, M. S. Sarandy, and D. A. Lidar, Phys. Rev. Lett. {\bf 93}, 250404 (2004)
	
\bibitem{Sarandy2009} M. S. Sarandy, Phys. Rev. A {\bf 80}, 022108 (2009).

\bibitem{Werlang2010} T. Werlang, C. Trippe, G. A. P. Ribeiro,  and G. Rigolin, Phys. Rev. Lett. {\bf 105}, 095702 (2010).

\bibitem{Chen2016} J.-J. Chen, J. Cui, Y.-R. Zhang, and H. Fan, Phys. Rev. A {\bf 94}, 022112 (2016).

\bibitem{Karpat2014} G. Karpat, B. Çakmak, and F. F. Fanchini, Phys. Rev. B {\bf 90}, 104431 (2014).

\bibitem{Malvezzi2016} A. L. Malvezzi, G. Karpat, B. Çakmak, F. F. Fanchini, T. Debarba, and R. O. Vianna, Phys. Rev. B {\bf 93}, 184428 (2016).

\bibitem{Qin2018} M. Qin, Z. Ren, and X. Zhang, Phys. Rev. A {\bf 98}, 012303 (2018).

\bibitem{Bennett1991} C. H. Bennett, G. Brassard, C. Crepeau, R. Jozsa, A. Peres, and W. K. Wootters, Phys. Rev. Lett. {\bf 67}, 661 (1991).

\bibitem{Ekert1991} A. K. Ekert, Phys. Rev. Lett. {\bf 67}, 661 (1991).

\bibitem{Bennett1992} C. H. Bennett, and S. J. Wiesner, Phys. Rev. Lett. {\bf 69}, 2881 (1992).

\bibitem{Chuan2012} T. K. Chuan, J. Maillard, K. Modi, T.Paterek, M. Paternostro,and M. Piani, Phys. Rev. Lett. {\bf 109}, 070501 (2012).

\bibitem{Dakic2012} B. Daki{\'c}, Y. O. Lipp, X. Ma, M. Ringbauer, S. Kropatschek, S. Barz, T. Paterek, V. Vedral, A. Zeilinger, {\v C}. Brukner, and P. Walther, Nat. Phys. {\bf 8}, 666 (2012).

\bibitem{Streltsov2016} A. Streltsov, E. Chitambar, S. Rana, M. N. Bera, A. Winter, and M. Lewenstein, Phys. Rev. Lett. {\bf 116}, 240405 (2016).

\bibitem{Hillery2016} M. Hillery, Phys. Rev. A {\bf 93}, 012111 (2016).

\bibitem{Shi2017} H. L. Shi, S. Y. Liu, X. H. Wang, W. L. Yang, Z. Y. Yang, and H. Fan, Phys. Rev. A {\bf 95}, 032307 (2017).

\bibitem{Tan2017} K. C. Tan, T. Volkoff, H. Kwon, and H. Jeong, Phys. Rev. Lett. {\bf 119}, 190405 (2017).

\bibitem{Horodecki2001} M. Horodecki, Quant. Inf. Comp. {\bf 1}, 3 (2001).

\bibitem{Modi2012} K. Modi, A. Brodutch, H. Cable, T. Paterek, and V. Vedral, Rev. Mod. Phys. {\bf 84}, 1655 (2012).

\bibitem{Streltsov2017} A. Streltsov, G. Adesso, and M. B. Plenio, Rev. Mod. Phys. {\bf 89}, 041003 (2017).

\bibitem{Tan2019} K. C. Tan, and H. Jeong, AVS Quantum Sci. {\bf 1}, 014701 (2019)

\bibitem{Chitambar2019} E. Chitambar, and G. Gour, Rev. Mod. Phys. {\bf 91}, 025001 (2019).

\bibitem{Wei2003} T. C. Wei, and P.M. Goldbart, Phys. Rev. A {\bf 68}, 042307 (2003).

\bibitem{Dakic2010} B. Dakić, V. Vedral, and {/v C}. Brukner, Phys. Rev. Lett. {\bf 105}, 190502 (2010).

\bibitem{Baumgratz2014} T. Baumgratz, M. Cramer, and M. B. Plenio, Phys. Rev. Lett. {\bf 113}, 140401 (2014).

\bibitem{Wei2005} T.-C. Wei, D. Das, S. Mukhopadyay, S. Vishveshwara, and P. M. Goldbart, Phys. Rev. A {\bf 71}, 060305 (2005).

\bibitem{Cheng2016} C.-C. Cheng, Y. Wang, and J.-L. Guo, Ann. Phys. {\bf 374}, 237 (2016).

\bibitem{Sha2018} Y.-T. Sha, Y.  Wang, Z.-H. Sun, and X.-W. Hou, Ann. Phys. {\bf 392}, 229 (2018).

\bibitem{Carollo2005} A. C. M. Carollo and J. K. Pachos, Phys. Rev. Lett. {\bf 95}, 157203 (2005).

\bibitem{Zhu2006} S.-L. Zhu, Phys. Rev. Lett. {\bf 96}, 077206 (2006). 

\bibitem{Vojta2003} M. Vojta, Rep. Prog. Phys. {\bf 66}, 2069 (2003).

\bibitem{Zanardi2007} P. Zanardi, P. Giorda, and M. Cozzini, Phys. Rev. Lett. {\bf 99}, 100603 (2007).

\bibitem{Peng2010} X.  Peng, S.  Wu,  J.  Li,  D.  Suter,and  J.  Du, Phys. Rev. Lett. {\bf 105}, 240405 (2010).

\bibitem{Pasquale2009} A. De Pasquale and P. Facchi, Phys. Rev. A {\bf 80},032102 (2009).

\bibitem{Tan2016} K. C. Tan, H. Kwon, C.-Y. Park, and H. Jeong, Phys. Rev. A {\bf 94}, 022329 (2016).

\bibitem{Tan2018} K. C. Tan, and H. Jeong, Phys. Rev. Lett. {\bf 121}, 220401 (2018).



\end{thebibliography}
\end{document}